\documentclass[conference,a4paper]{IEEEtran}

\usepackage{amsmath}
\usepackage{cases}
\usepackage{amsfonts}
\usepackage{amssymb}
\usepackage{boxedminipage}
\usepackage{graphicx}
\usepackage[usenames]{color}
\usepackage{array,color}
\usepackage{colortbl}
\usepackage{extarrows}%arrows
\usepackage{bm}
\usepackage{amsmath}
\setcounter{MaxMatrixCols}{20}
\usepackage[numbers,sort&compress]{natbib}

\newtheorem{thm}{Theorem}%[section]
\newtheorem{cor}{Corollary}

\newtheorem{defn}{Definition}

% ---------------------------------------------------------------------

\begin{document}
\title{An Extended Fano's Inequality for the Finite Blocklength Coding}

\author{ Yunquan Dong, Pingyi Fan\\
\{dongyq08@mails,fpy@mail\}.tsinghua.edu.cn\\
Department of Electronic Engineering,
Tsinghua University, Beijing, P.R. China.  }
\maketitle

\begin{abstract}
Fano's inequality reveals the relation between the conditional entropy and the probability of error . It has been the key tool in proving the converse of coding theorems in the past sixty years. In this paper, an extended Fano's inequality is proposed, which is tighter and more applicable for codings in the finite blocklength regime. Lower bounds on the mutual information and an upper bound on the codebook size are also given, which are shown to be tighter than the original Fano's inequality. Especially, the extended Fano's inequality is tight for some symmetric channels such as the $q$-ary symmetric channels (QSC).
\end{abstract}

\begin{keywords}
Fano's inequality,  finite blocklength regime, channel coding, Shannon theory.
\end{keywords}

\section{Introduction}
As known to all, Shannon's information theory deals mainly with the representation and transmission of information. In the development of both source and channel coding theorems, especially for their converses, Fano's inequality serves as the key tool \cite{Fano1952}.

\thm{Fano's Inequality}

$X$ and $Y$ are two random variables following $(X,Y) \backsim$ $p(x,y)$ and $X,Y\in \mathcal{X}$. Define $P_e=\Pr\{X\neq Y\}$, then
\begin{equation}\label{eq:fano}
H(X|Y)\leq H(P_e)+P_e \log(M-1)
\end{equation}
where $M=|\mathcal X|$ is the cardinality of $X$ and $Y$, $H(P_e)$ is the binary entropy function $H(x)=-[x\log x+(1-x)\log(1-x)]$ for $0\leq x\leq 1$. Thus Fano' inequality can be further relaxed by $H(P_e)\leq 1$.

Usually, the left hand side of (\ref{eq:fano}) is referred to as the equivocation, which is quantified by the conditional entropy. Particularly, it represents the uncertainty whether the restored/decoded message $Y$ is the same as the original one, i.e., $X$. On the other hand, the right hand side implies the reliability of the source/channel coding in terms of a function of error probability $P_e$. It was shown in \cite{Raymond2008} that vanishing equivocation implies vanishing error probability. However, vanishing error probability does not necessarily guarantee a vanishing equivocation, especially for some $X, Y$ of countably infinite alphabet. 

In proving the converse of coding theorems, one wants to find the upper bound on the size of the codebook given arbitrary code length and error probability. The following theorem is an immediate inference of Fano's inequality, simple but useful.

\thm\label{th:fano_IXY} \cite{Hanverdu1994} Suppose $X$ and $Y$ are two random variables that take values on the same finite set with cardinality $M$ and at least one of them is equiprobable. Then the mutual information between them satisfies
\begin{equation}\label {eq:fano_IXY}
    I(X;Y)\geq (1-P_e)\log M - H(P_e),
\end{equation}
where $P_e=\Pr\{X\neq Y\}$ and $H(x)=-[x\log x+(1-x)$ $\log(1-x)]$.

As an inference of this result, the following theorem gives an upper bound on the size of a code as a function of the average error probability.

\thm\label{th:fano_converse} \cite{verdu2010} Every $(M,\epsilon)$-code (average probability of error) for a random transformation $P_{Y|X}$ satisfies
\begin{equation}\label {eq:fano_M}
    \log M\leq \frac{1}{(1-\epsilon)} \sup \limits_X I(X;Y) +  \frac{1}{(1-\epsilon)}H(\epsilon),
\end{equation}
where $\epsilon=\Pr\{X\neq Y\}$, $H(x)=-[x\log x+(1-x)\log(1-x)]$.

Although simple, these two theorems are insightful and easy to compute both in theory and numerically.

The classical coding theorems are mainly based on the asymptotic equipartition property (AEP) and typical/joint-typical decoder \cite{Shannon1948}. However, applications of AEP requires infinite long codewords, where the error probability goes either to 0 or 1 as the code length goes to infinity. Although these coding theorems provide fundamental limits for modern communications, research on the finite blocklength coding schemes are more important in engineering applications. Given the block length, upper bounds on the achievable error probability and the achievable code size were obtained in \cite{verdu2010}. Most importantly, a tight approximation for the achievable maximal rate given the error probability and code length was presented.

In this paper, we consider the entropy of one random variable vector conditioned on another, and the corresponding probability of error in guessing one from the other, by proposing an extended Fano's inequality. The extended Fano's equality has better performance by taking advantage of a more careful consideration on the error patterns. It suits codings in the finite blocklength regime better and is useful in bounding the mutual information between random vectors, and the codebook size given the block length and average symbol error probability constraint.

In the following part of this paper, we present the extended Fano's inequality in Section \ref{sec:gener_fano} first. The lower bounds on the mutual information between two random variable vectors and a upper bound on the codebook size given the block length and error probability are given in Section \ref{sec:converse}. An application of the the obtained result to the $q$-ary symmetric channels (QSC) are presented in Section \ref{sec:app}, which shows that the extended Fano's inequality is tight for such channels. Finally, we concluded the paper in Section \ref{sec:conclusion}. Throughout the paper, vectors indicated by bold.

\section{Fano's Inequality Extension}\label{sec:gener_fano}
Although Fano's inequality has been used widely in the past few years, it can be improved by treating the error events more carefully. In this section, a refinement of Fano's inequality is presented, which is tighter and more applicable for finite blocklength coding design.

\begin{thm}{Fano's Inequality Extension}\label{th:gner_fano}

 Suppose that $\boldsymbol{X}=\{X_1,X_2,\cdots, X_n\}$ and $\boldsymbol{Y}=\{Y_1,Y_2,$ $\cdots, Y_n\}$ are two $n$-dimension random vectors where $X_k$ and $Y_k$ $(k=1,2,\cdots,n)$ take values on the same finite set $\mathcal{X}$ with cardinality $|\mathcal{X}|=q$.
 Then the conditional entropy satisfies {\small
 \begin{equation}\label{rt:gener_fano}
    H(\boldsymbol{X}|\boldsymbol{Y})\leq H(\boldsymbol{p})+\sum_{k=1}^n p_k\log\left(C_n^k (q-1)^k\right),
 \end{equation} }
where $H(\boldsymbol{p})=-\sum_{k=0}^np_k\log p_k$ is the discrete entropy function. $\boldsymbol{p}=\{p_0,p_1,\cdots,p_n\}$ is the error distribution, where the error probabilities are $p_k=\Pr\left(H_d(\boldsymbol{X},\boldsymbol{Y})=k\right)$ for $k=0,1,\cdots, n$. $H_d(\boldsymbol{X},\boldsymbol{Y})$ is the generalized Hamming distance, defined as the number of symbols in $\boldsymbol{X}$ that are different from the corresponding symbol in $\boldsymbol{Y}$.
\end{thm}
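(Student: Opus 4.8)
\emph{Proof sketch.} The plan is to follow the skeleton of the classical Fano argument but to replace the binary error indicator $\mathbf{1}\{\boldsymbol{X}\neq\boldsymbol{Y}\}$ by the finer statistic $E := H_d(\boldsymbol{X},\boldsymbol{Y})$, the generalized Hamming distance, which records not merely \emph{whether} an error occurred but \emph{how many} symbols are in error. Two elementary facts drive everything: since $E$ is a deterministic function of the pair $(\boldsymbol{X},\boldsymbol{Y})$, we have $H(E\mid \boldsymbol{X},\boldsymbol{Y})=0$; and $E$ is distributed according to $\boldsymbol{p}=(p_0,\dots,p_n)$ by the very definition of the $p_k$.

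First I would apply the chain rule for conditional entropy to the joint quantity $H(E,\boldsymbol{X}\mid \boldsymbol{Y})$ in the two possible orders. Expanding with $\boldsymbol{X}$ first gives
\begin{equation}
H(E,\boldsymbol{X}\mid \boldsymbol{Y})=H(\boldsymbol{X}\mid \boldsymbol{Y})+H(E\mid \boldsymbol{X},\boldsymbol{Y})=H(\boldsymbol{X}\mid \boldsymbol{Y}),
\end{equation}
using $H(E\mid \boldsymbol{X},\boldsymbol{Y})=0$, while expanding with $E$ first gives
\begin{equation}
H(E,\boldsymbol{X}\mid \boldsymbol{Y})=H(E\mid \boldsymbol{Y})+H(\boldsymbol{X}\mid E,\boldsymbol{Y}).
\end{equation}
Equating the two yields the exact identity $H(\boldsymbol{X}\mid \boldsymbol{Y})=H(E\mid \boldsymbol{Y})+H(\boldsymbol{X}\mid E,\boldsymbol{Y})$, and it remains only to bound each term.

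For the first term I would invoke that conditioning cannot increase entropy, so $H(E\mid \boldsymbol{Y})\leq H(E)=H(\boldsymbol{p})$, which accounts for the leading term on the right of (\ref{rt:gener_fano}). For the second term I would condition on the value of $E$ and write $H(\boldsymbol{X}\mid E,\boldsymbol{Y})=\sum_{k=0}^n p_k\,H(\boldsymbol{X}\mid E=k,\boldsymbol{Y})$. The key counting step is that, once $\boldsymbol{Y}=\boldsymbol{y}$ is fixed and $E=k$ is imposed, $\boldsymbol{X}$ is constrained to the set of vectors at Hamming distance exactly $k$ from $\boldsymbol{y}$: there are $C_n^k$ ways to choose the $k$ erroneous coordinates and $q-1$ admissible symbols in each, for a total of $C_n^k(q-1)^k$ candidates. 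The maximum-entropy (uniform) bound over a set of this size then gives $H(\boldsymbol{X}\mid E=k,\boldsymbol{Y})\leq \log\!\big(C_n^k(q-1)^k\big)$, with the $k=0$ term vanishing since distance zero forces $\boldsymbol{X}=\boldsymbol{y}$. Summing and combining the two bounds reproduces (\ref{rt:gener_fano}).

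The individual steps are light; the real content — and the place to be careful — is the counting/maximum-entropy step, specifically verifying that the Hamming-sphere size $C_n^k(q-1)^k$ is the same for every $\boldsymbol{y}$, so that the per-$k$ entropy bound is a genuine constant that can be pulled out of the average over $\boldsymbol{y}$. The conceptual point worth flagging is that classical Fano is the collapse of this argument to the two-valued statistic $\mathbf{1}\{\boldsymbol{X}\neq\boldsymbol{Y}\}$, whose single ``error'' cell must be charged the whole complement of size $M-1$; refining the partition by Hamming weight replaces that coarse $\log(M-1)$ by the weighted sum $\sum_k p_k\log\!\big(C_n^k(q-1)^k\big)$, which is never larger and is tight precisely when $\boldsymbol{X}$ is conditionally uniform on each Hamming sphere, as for the QSC.
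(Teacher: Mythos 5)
Your proposal is correct and follows essentially the same route as the paper: the chain-rule expansion of $H(E,\boldsymbol{X}\mid\boldsymbol{Y})$ in two orders with $E=H_d(\boldsymbol{X},\boldsymbol{Y})$, the bound $H(E\mid\boldsymbol{Y})\leq H(E)=H(\boldsymbol{p})$, and the counting/maximum-entropy bound $H(\boldsymbol{X}\mid E=k,\boldsymbol{Y})\leq\log\left(C_n^k(q-1)^k\right)$. The only difference is presentational: you explicitly flag that the Hamming-sphere size is independent of $\boldsymbol{y}$ and note the tightness condition, which the paper defers to its QSC discussion.
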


\begin{proof} Define the error random variable as  $E=k$ if $H_d(\boldsymbol{X},\boldsymbol{Y})=k$ for $k=0,1,\cdots,n$. 
According to the chain rule of the joint entropy, $H(E,\boldsymbol{X}|\boldsymbol{Y})$ can be expressed in the following ways,
\begin{subequations}\label{dr:fano_g}
    \begin{align}
            H(E,\boldsymbol{X}|\boldsymbol{Y})&=H(\boldsymbol{X}|\boldsymbol{Y})+H(E|\boldsymbol{X},\boldsymbol{Y})\\
            &=H(E|\boldsymbol{Y})+H(\boldsymbol{X}|E,\boldsymbol{Y})
    \end{align}
\end{subequations}

Particularly, in (\ref{dr:fano_g}.a), it is clear that $H(E|\boldsymbol{X},\boldsymbol{Y})=0$. Then we have
\begin{equation}\label{dr:Hx_y}
    \begin{split}
        H(\boldsymbol{X}|\boldsymbol{Y})=&H(E|\boldsymbol{Y})+H(\boldsymbol{X}|E,\boldsymbol{Y})\\
        \stackrel{(a)}{\leq} & H(E)+H(\boldsymbol{X}|E,\boldsymbol{Y}),
    \end{split}
\end{equation}
where (a) follows the fact that entropy increases if its condition is removed, i.e., $H(E)\geq H(E|\boldsymbol{Y})$. Particularly, we have $H(E)=H(\boldsymbol{p})=-\sum_{k=0}^np_k\log p_k$.

According to its definition, we have
 \begin{equation}\label{dr:condi_err}
    H(\boldsymbol{X}|E,\boldsymbol{Y})=\sum_{k=0}^n p_k H(\boldsymbol{X}|E=k,\boldsymbol{Y}).
 \end{equation}

 When considering $H(\boldsymbol{X}|E=k,\boldsymbol{Y})$, we know that there are $k$ disaccord symbol pairs between $\boldsymbol{X}$ and $\boldsymbol{Y}$. For each fixed $\boldsymbol{Y}=\boldsymbol{y}$, every symbol in $\boldsymbol{X}$ which belongs to a disaccord pair has $q-1$ possible choices except the one in $\boldsymbol{y}$. Thus $\boldsymbol{X}|(E=k,\boldsymbol{y})$ has $(q-1)^k$ choices. Besides, there $C_n^k$ selections for the positions of error symbols for each given $k$. Therefore, the total number of possible codeword $\boldsymbol{X}$ is $C_n^k (q-1)^k$, which means
\begin{equation}
    H(\boldsymbol{X}|E=k,\boldsymbol{Y})\leq \log\left( C_n^k(q-1)^k\right).
\end{equation}

Particularly, note that $H(\boldsymbol{X}|E=0,\boldsymbol{Y})=0$ since there is no uncertainty in determining $\boldsymbol{X}$ from $\boldsymbol{Y}$, if they are the same.

Then, (\ref{dr:condi_err}) can be written as
\begin{equation}\label{Hey}
    \begin{split}
        H(\boldsymbol{X}|E,\boldsymbol{Y})=&\sum_{k=1}^n P(E=k) H(\boldsymbol{X}|E=k,\boldsymbol{Y}) \\
        \leq &\sum_{k=1}^n p_k \log\left(C_n^k (q-1)^k\right).
    \end{split}
\end{equation}

By combining (\ref{dr:Hx_y}) and (\ref{Hey}), the proof of the theorem is completed.

 \end{proof}

\rem In fact, the error distribution $\boldsymbol{p}$ is easy to calculate, especially for some special channels. For example, the discrete $q$-ary symmetric channels is shown in Fig \ref{fig:qbc}.  In this situation, $p_k=C_n^k ((q-1)\varepsilon)^k (1-(q-1)\varepsilon)^{n-k}$.

\begin{figure}[!t]
\centering
\includegraphics[width=2.5in]{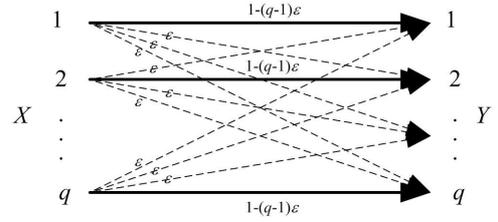}
\caption{The $q$-ary Symmetric Channel} \label{fig:qbc}
\end{figure}

\rem It is clear that Theorem \ref{th:gner_fano} is a generalization of Fano's inequality. Specifically, when the block length is 1, i.e., $n=1$, we have $p_1=p_e$, $p_0=1-p_e$, and $C_1^1=1$. In this case, Theorem \ref{th:gner_fano} reduces to
\begin{equation}
    H(X|Y)\leq H(p_e)+p_e\log(q-1)
\end{equation}
which is exactly the same as Fano's inequality.

As a variant of Theorem \ref{th:gner_fano}, the following theorem presents the conditional entropy in terms of relative entropy.

\begin{thm}\label{th:gner_fanoRela}
Suppose that $\boldsymbol{X}=\{X_1,X_2,\cdots, X_n\}$ and $\boldsymbol{Y}=\{Y_1,Y_2,\cdots, Y_n\}$ are two $n$-dimension random vectors where $X_k$ and $Y_k$ $(k=1,2,\cdots,n)$ take values on the same finite set $\mathcal{X}$ with cardinality $|\mathcal{X}|=q$.
 Then the conditional entropy satisfies
 \begin{equation}
    H(\boldsymbol{X}|\boldsymbol{Y})\leq n \log q- D(\boldsymbol{p}\|\boldsymbol{q}),
 \end{equation}
where $D(\boldsymbol{p}\|\boldsymbol{q})=\sum_{k=0}^n p_k \log\frac{p_k}{q_k}$ is the discrete relative entropy function. The error probabilities are $p_k=\Pr\left(H_d(\boldsymbol{X},\boldsymbol{Y})=k\right)$ for $k=0,1,\cdots, n$. Donate $\boldsymbol{p}=\{p_0,p_1,\cdots,p_n\}$, and $\boldsymbol{q}=\{q_0,q_1,\cdots,q_n\}$ is a probability distribution with $q_k=\frac{C_n^k (q-1)^k}{q^n}$
\end{thm}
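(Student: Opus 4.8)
The plan is to derive this directly from Theorem~\ref{th:gner_fano}, recognizing that the right-hand side of (\ref{rt:gener_fano}) is already a relative entropy in disguise. The only genuine prerequisite is to check that $\boldsymbol{q}$ is a bona fide probability distribution. Each $q_k=C_n^k(q-1)^k/q^n$ is clearly nonnegative, and by the binomial theorem
\begin{equation}
\sum_{k=0}^n C_n^k (q-1)^k = \big(1+(q-1)\big)^n = q^n,
\end{equation}
so that $\sum_{k=0}^n q_k = q^n/q^n = 1$. With this in hand the distribution $\boldsymbol{q}$ in the statement is well defined.

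Next I would take the bound of Theorem~\ref{th:gner_fano} and fold the explicit entropy term $H(\boldsymbol{p})=-\sum_{k=0}^n p_k\log p_k$ together with the combinatorial sum. Because $C_n^0(q-1)^0=1$ contributes a vanishing logarithm, the sum $\sum_{k=1}^n p_k\log\big(C_n^k(q-1)^k\big)$ may be extended to start at $k=0$ without changing its value. The entire right-hand side then collapses into the single expression $\sum_{k=0}^n p_k\log\big(C_n^k(q-1)^k/p_k\big)$.

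The decisive step is to substitute $C_n^k(q-1)^k=q^n\,q_k$, which is just the definition of $q_k$ read backwards. Splitting the resulting logarithm yields $\sum_{k=0}^n p_k\log q^n+\sum_{k=0}^n p_k\log(q_k/p_k)=n\log q\cdot\sum_{k=0}^n p_k - D(\boldsymbol{p}\|\boldsymbol{q})$. Invoking $\sum_{k=0}^n p_k=1$ reduces the first term to $n\log q$ and produces exactly the claimed inequality $H(\boldsymbol{X}|\boldsymbol{Y})\leq n\log q - D(\boldsymbol{p}\|\boldsymbol{q})$.

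I expect no substantive obstacle here, since the argument is essentially an algebraic rewriting of the previous theorem rather than a fresh estimate. The only points demanding care are the normalization of $\boldsymbol{q}$ through the binomial identity and the bookkeeping of the $k=0$ term, whose vanishing logarithmic contribution is precisely what makes the extension of the summation range legitimate.
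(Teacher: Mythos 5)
Your proposal is correct and follows essentially the same route as the paper: verify via the binomial theorem that $\boldsymbol{q}$ is a probability distribution, absorb the $k=0$ term (whose logarithm vanishes), and rewrite the right-hand side of Theorem~\ref{th:gner_fano} as $n\log q - D(\boldsymbol{p}\|\boldsymbol{q})$. No issues.
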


\begin{proof}
Firstly, we know from the binomial theorem that $\{ \frac{C_n^k a^kb^{n-k}}{(a+b)^n} \}_{0\leq n}$ is a probability distribution. Let $a=q-1$ and $b=1$, we know that $\boldsymbol{q}$ is a probability distribution where $q_k=\frac{C_n^k (q-1)^k}{q^n}$.

According to Theorem \ref{th:gner_fano},
\begin{equation}\nonumber
    H(\boldsymbol{X}|\boldsymbol{Y})
     \stackrel{(a)}{\leq}-\sum_{k=0}^n p_k \log\frac{p_k}{C_n^k (q-1)^k} \nonumber 
    = n \log q -D(\boldsymbol{p}\parallel \boldsymbol{q})
\end{equation}
where (a) holds because $\log \left(C_n^k (q-1)^k \right)=0$ for $k=0$.
\end{proof}

\rem By the definition of $\boldsymbol{p}$, it is clear that it reflects the error performance of the channel and is totaly determined by the channel itself. On the contrary, $\boldsymbol{q}$ is a distribution where each error patten is assumed to appear equiprobablely. In this situation, the probability that there are $k$ error symbols  in the codeword is $q_k=C_n^k (\frac{q-1}{q})^k(\frac{1}{q})^{n-k}$. Thus $D(\boldsymbol{p}\|\boldsymbol{q})$ is the distance between the actual error pattern distribution and the uniform error pattern distribution. Particularly, if the channel is an error free one, i.e., $p_0=1$ and $p_k=0$ for $1\leq k\leq n$, we have $D(\boldsymbol{p}\|\boldsymbol{q})=\sum_{k=0}^n p_k \log\frac{p_k}{q_k}=1\cdot\log \frac{1}{q_0}=\log q^n$. According to Theorem \ref{th:gner_fanoRela}, we get $H(\boldsymbol{X}|\boldsymbol{Y})=0$, which is reasonable with the assumption on the channel. In this sense, Theorem \ref{th:gner_fanoRela} is tight.

The Fano's inequality has been playing an important role in the history of information theory because it built a close connection between conditional entropy and error probability. For extended Fano's inequality given in Theorem \ref{th:gner_fano}, it is especially applicable in finite blocklength coding. It also presents the relationship between conditional entropy and error probabilities, which are defined as follows.

\begin{defn}
    \textit{Block error probability} $P_b$ is the average error probability of a block (codeword), i.e., $P_b=\Pr\{\boldsymbol{X}\neq\boldsymbol{Y}\}$. Then we have
    $P_b=\sum_{k=1}^n p_k$. Thus, we have $P_b\geq p_k$ for any $0<k\leq n$.
\end{defn}

\begin{defn}
    \textit{Symbol error probability} $P_s$ is the average error probability of a symbol, i.e., $P_s=\Pr\{X_k\neq Y_k\}$, which can be expressed by $P_s=\frac{1}{n}\sum_{k=1}^n kp_k$.
\end{defn}

\rem In many communication systems, especially those using error correction channel codings, a block error doesn't imply a system failure. On the contrary, the error can be corrected or part of the block can still be used with some performance degradation. In this case, the symbol error is more useful than the block error.

Particularly, a corollary following our result as shown below will answer this problem.

\begin{cor}\label{cor:1}
    Suppose that $\boldsymbol{X}=\{X_1,X_2,\cdots, X_n\}$ and $\boldsymbol{Y}=\{Y_1,Y_2,\cdots, Y_n\}$ are two $n$-dimension random vectors where $X_k$ and $Y_k$ take values on the same finite set $\mathcal{X}$ with cardinality $|\mathcal{X}|=q$.
 Then the conditional entropy satisfies
 \begin{equation}\label{rt:cor_1}
    H(\boldsymbol{X}|\boldsymbol{Y}) \leq n- D(\boldsymbol{p} \| \boldsymbol{w}) + n P_s\log (q-1),
 \end{equation}
where $\boldsymbol{w}=\{w_0,w_1,\cdots,w_n\}$ is a probability distribution with $w_k=\frac{C_n^k}{2^n}$.
\end{cor}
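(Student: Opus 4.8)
The plan is to start from the decomposition already established in Theorem \ref{th:gner_fano}, namely $H(\boldsymbol{X}|\boldsymbol{Y})\leq H(\boldsymbol{p})+\sum_{k=1}^n p_k\log\left(C_n^k(q-1)^k\right)$, and to rewrite the right-hand side so that the combinatorial factor $C_n^k$ is packaged into a relative-entropy term against the binomial-type distribution $\boldsymbol{w}$ with $w_k=C_n^k/2^n$, while the $(q-1)^k$ factor is collected separately into a term proportional to the symbol error probability $P_s$. The key observation driving the whole argument is that $\log\left(C_n^k(q-1)^k\right)=\log C_n^k + k\log(q-1)$, which splits the sum cleanly into a part carrying the binomial coefficient and a part linear in $k$.

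First I would expand $\sum_{k=1}^n p_k\log\left(C_n^k(q-1)^k\right)=\sum_{k=1}^n p_k\log C_n^k + \log(q-1)\sum_{k=1}^n k\,p_k$, noting that the $k=0$ terms contribute nothing since $\log C_n^0=0$, so the sums may as well run from $k=0$. The second sum is immediately recognized as $n P_s\log(q-1)$ by the definition $P_s=\frac{1}{n}\sum_{k=1}^n k p_k$, which disposes of that piece. For the first sum I would substitute $\log C_n^k = \log\left(2^n w_k\right)=n\log 2 + \log w_k = n + \log w_k$ (working in bits so that $\log 2 = 1$), giving $\sum_{k=0}^n p_k\log C_n^k = n + \sum_{k=0}^n p_k\log w_k$. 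Combining this with the entropy term $H(\boldsymbol{p})=-\sum_{k=0}^n p_k\log p_k$ yields $H(\boldsymbol{p})+\sum_{k=0}^n p_k\log w_k = -\sum_{k=0}^n p_k\log\frac{p_k}{w_k}=-D(\boldsymbol{p}\|\boldsymbol{w})$, which is exactly the desired relative-entropy contribution.

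Assembling the three pieces gives $H(\boldsymbol{X}|\boldsymbol{Y})\leq n - D(\boldsymbol{p}\|\boldsymbol{w}) + n P_s\log(q-1)$, which is precisely \eqref{rt:cor_1}. The one point that requires a word of care, rather than a genuine obstacle, is the base of the logarithm: the step $\log C_n^k = n + \log w_k$ relies on $\log 2 = 1$, so the identity $2^n w_k = C_n^k$ must be used in conjunction with logarithms taken to base $2$, consistent with the binary-entropy convention used throughout the paper. I would also remark in passing that $\boldsymbol{w}$ is a legitimate probability distribution because $\sum_{k=0}^n C_n^k = 2^n$, so $\sum_k w_k = 1$, which validates treating $D(\boldsymbol{p}\|\boldsymbol{w})$ as a bona fide relative entropy. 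Everything else is a routine regrouping of finite sums, so I expect no substantive difficulty beyond keeping the logarithm base and the index ranges straight.
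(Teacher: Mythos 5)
Your proposal is correct and follows essentially the same route as the paper: both start from Theorem \ref{th:gner_fano}, split $\log\left(C_n^k(q-1)^k\right)$ into the binomial part (rewritten as $\log(2^n w_k)=n+\log w_k$, absorbed with $H(\boldsymbol{p})$ into $-D(\boldsymbol{p}\|\boldsymbol{w})$) and the part linear in $k$ (identified as $nP_s\log(q-1)$). Your explicit remarks about the base-$2$ convention and the normalization $\sum_k w_k=1$ are correct housekeeping that the paper leaves implicit.
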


\begin{proof}
According to Theorem \ref{th:gner_fano}, one has{\small
\begin{equation}\nonumber
    \begin{split}
        &H(\boldsymbol{X}|\boldsymbol{Y})\leq H(\boldsymbol{p}) + \sum_{k=0}^n p_k\log\left(\frac{C_n^k}{2^n}2^n \right) + \sum_{k=1}^n p_k\log (q-1)^k \\
        &=-\sum_0^n p_k \log p_k + \sum_{k=0}^n p_k\log w_k+\sum_{k=0}^n p_k \log 2^n \\ &=n- D(\boldsymbol{p} \| \boldsymbol{w}) + n P_s\log (q-1).
    \end{split}
\end{equation} }
\end{proof}

\rem Since the distribution $\boldsymbol{w}$ can be expressed as $w_k=C_n^k (\frac12)^k(\frac12)^{n-k}$, which is a binomial distribution with the symbol error probability of $0.5$.  $D(\boldsymbol{p} \| \boldsymbol{w})$ is a measure of the distance between the error probability distribution and the binomial distribution with parameter $0.5$.

\rem If one takes $n=1$, Corollary \ref{cor:1} will reduces to $H(X|Y)\leq1+P_s\log (q-1)$, which is a frequently used form of Fano's inequality.

\rem It is seen that the extended Fano's inequality builds a natural connection between conditional entropy and symbol error and is especially applicable for finite length codings.

\section{Converse Results}\label{sec:converse}

\subsection{Lower Bounds on the Mutual Information}
Based on the proposed generalized Fano's inequality, the following lower bounds on the mutual information between $\boldsymbol{X}$ and $\boldsymbol{Y}$ can be obtained.

\begin{thm}\label{th:gner_fano_IXY}
 Suppose that $\boldsymbol{X}=\{X_1,X_2,\cdots, X_n\}$ and $\boldsymbol{Y}=\{Y_1,Y_2,\cdots, Y_n\}$ are two $n$-dimension random vectors that satisfy the following.
 \begin{enumerate}
   \item $X_k$ and $Y_k$ $(k=1,2,\cdots,n)$ take values on the same finite set $\mathcal{X}$ with cardinality $|\mathcal{X}|=q$.
   \item Either $\boldsymbol{X}$ or $\boldsymbol{Y}$ is equiprobable.
   \item The error probabilities are $p_k=\Pr\left(H_d(\boldsymbol{X},\boldsymbol{Y})=k\right)$ for $k=0,1,\cdots, n$. Donate the error distribution as $\boldsymbol{p}=\{p_0,p_1,\cdots,p_n\}$.
 \end{enumerate}
Then the mutual information between $\boldsymbol{X}$ and $\boldsymbol{Y}$ satisfies
 \begin{equation}
    I(\boldsymbol{X};\boldsymbol{Y})\geq n\log q - H(\boldsymbol{p}) - \sum_{k=1}^n p_k\log\left(C_n^k (q-1)^k\right),
 \end{equation}
where $H(\boldsymbol{p})=-\sum_{k=0}^np_k\log p_k$ is the entropy function.
\end{thm}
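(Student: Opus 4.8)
The plan is to obtain this bound directly from the identity $I(\boldsymbol{X};\boldsymbol{Y}) = H(\boldsymbol{X}) - H(\boldsymbol{X}|\boldsymbol{Y})$ combined with Theorem \ref{th:gner_fano}. First I would use the equiprobability hypothesis (condition 2) to pin down one marginal entropy exactly: if $\boldsymbol{X}$ is uniform over the $q^n$ points of $\mathcal{X}^n$, then $H(\boldsymbol{X}) = \log q^n = n\log q$. This is the only place where the equiprobability assumption is needed, and it is precisely what converts the conditional-entropy bound of Theorem \ref{th:gner_fano} into a mutual-information bound.

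Next I would substitute the extended Fano estimate $H(\boldsymbol{X}|\boldsymbol{Y}) \leq H(\boldsymbol{p}) + \sum_{k=1}^n p_k\log\left(C_n^k (q-1)^k\right)$ into the identity above. Since $H(\boldsymbol{X}|\boldsymbol{Y})$ enters with a minus sign, an upper bound on it becomes a lower bound on $I(\boldsymbol{X};\boldsymbol{Y})$, giving exactly the claimed inequality. No new estimates beyond Theorem \ref{th:gner_fano} are required.

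The one point deserving care is the phrase \emph{either $\boldsymbol{X}$ or $\boldsymbol{Y}$} in condition 2. If instead $\boldsymbol{Y}$ is the equiprobable vector, I would invoke the symmetry $I(\boldsymbol{X};\boldsymbol{Y}) = I(\boldsymbol{Y};\boldsymbol{X}) = H(\boldsymbol{Y}) - H(\boldsymbol{Y}|\boldsymbol{X})$ and apply Theorem \ref{th:gner_fano} with the roles of the two vectors interchanged. This is legitimate because the generalized Hamming distance $H_d(\boldsymbol{X},\boldsymbol{Y})$ is symmetric in its arguments, so the error distribution $\boldsymbol{p}$ is unchanged under the swap and the counting argument behind Theorem \ref{th:gner_fano} (at most $C_n^k(q-1)^k$ candidates for the non-equiprobable vector, given the other vector and given $E=k$) applies verbatim. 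Hence the same bound holds in both cases.

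I do not anticipate any genuine obstacle, as the statement is essentially an immediate corollary of Theorem \ref{th:gner_fano}. The only things to keep straight are that the subtracted quantity must be an \emph{upper} bound on the conditional entropy, so that the final inequality points in the right direction, and that the symmetry of $H_d$ is what allows a single bound to cover both equiprobability cases.
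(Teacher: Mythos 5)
Your proposal is correct and follows essentially the same route as the paper: apply $I(\boldsymbol{X};\boldsymbol{Y})=H(\boldsymbol{X})-H(\boldsymbol{X}|\boldsymbol{Y})$ with $H(\boldsymbol{X})=n\log q$ from equiprobability, substitute the bound of Theorem \ref{th:gner_fano}, and invoke the symmetry of the expression (via the symmetry of $H_d$) to cover the case where $\boldsymbol{Y}$ is the equiprobable vector. Your treatment of the symmetry step is in fact slightly more explicit than the paper's.
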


\begin{proof}
If $\boldsymbol{X}$ is equiprobable, $H(\boldsymbol{X})=\log q^n=n\log q$.

On the other hand, the mutual information is given by $I(\boldsymbol{X};\boldsymbol{Y})=H(\boldsymbol{X})-H(\boldsymbol{X}|\boldsymbol{Y})$. Together with Theorem \ref{th:gner_fano},
 \begin{equation}\label{dr:gener_IXY1}
    I(\boldsymbol{X};\boldsymbol{Y})\geq n\log q - H(\boldsymbol{p}) - \sum_{k=1}^n p_k\log\left(C_n^k (q-1)^k\right).
 \end{equation}

Note that $\boldsymbol{X}$ and $\boldsymbol{Y}$ are totally symmetric in (\ref{dr:gener_IXY1}). Therefore, if $\boldsymbol{Y}$ is assumed to be equiprobable at the beginning of the proof, one can get the same result. Thus Theorem \ref{th:gner_fano_IXY} is proved.
\end{proof}

By using Theorem \ref{th:gner_fanoRela}, the mutual information between $\boldsymbol{X}$ and $\boldsymbol{Y}$ can be bounded by the following Corollary.

\begin{cor}\label{cor:4}
Suppose that $\boldsymbol{X}=\{X_1,X_2,\cdots, X_n\}$ and $\boldsymbol{Y}=\{Y_1,Y_2,\cdots, Y_n\}$ are two $n$-dimension random vectors that satisfy the following.
 \begin{enumerate}
   \item $X_k$ and $Y_k$ $(k=1,2,\cdots,n)$ take values on the same finite set $\mathcal{X}$ with cardinality $|\mathcal{X}|=q$.
   \item Either $\boldsymbol{X}$ or $\boldsymbol{Y}$ is equiprobable.
   \item The error probabilities are $p_k=\Pr\left(H_d(\boldsymbol{X},\boldsymbol{Y})=k\right)$ and $\boldsymbol{p}=\{p_0,p_1,\cdots,p_n\}$.
 \end{enumerate}
Then the mutual information between $\boldsymbol{X}$ and $\boldsymbol{Y}$ satisfies
 \begin{equation}\label{rt:cor4}
     I(\boldsymbol{X};\boldsymbol{Y})\geq D(\boldsymbol{p}\|\boldsymbol{q}) \end{equation}
 where $D(\boldsymbol{p}\|\boldsymbol{q})=\sum_{k=0}^n p_k \log\frac{p_k}{q_k}$ is the discrete relative entropy function and $\boldsymbol{q}=\{q_0,q_1,\cdots,q_n\}$ is a probability distribution with $q_k=\frac{C_n^k (q-1)^k}{q^n}$
\end{cor}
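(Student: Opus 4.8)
The plan is to obtain Corollary \ref{cor:4} as an immediate consequence of the relative-entropy form of the extended inequality, Theorem \ref{th:gner_fanoRela}, combined with the equiprobability hypothesis, in exactly the same spirit as the proof of Theorem \ref{th:gner_fano_IXY}. First I would assume without loss of generality that $\boldsymbol{X}$ is equiprobable. Since $X_1,\dots,X_n$ each take values in a set of cardinality $q$, the vector $\boldsymbol{X}$ ranges over $q^n$ possible realizations, each occurring with probability $q^{-n}$, so that $H(\boldsymbol{X})=\log q^n=n\log q$.

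Next I would expand the mutual information as $I(\boldsymbol{X};\boldsymbol{Y})=H(\boldsymbol{X})-H(\boldsymbol{X}|\boldsymbol{Y})$ and substitute the bound supplied by Theorem \ref{th:gner_fanoRela}, namely $H(\boldsymbol{X}|\boldsymbol{Y})\leq n\log q-D(\boldsymbol{p}\|\boldsymbol{q})$ with $q_k=\frac{C_n^k(q-1)^k}{q^n}$. This gives $I(\boldsymbol{X};\boldsymbol{Y})\geq n\log q-\bigl(n\log q-D(\boldsymbol{p}\|\boldsymbol{q})\bigr)=D(\boldsymbol{p}\|\boldsymbol{q})$, since the two $n\log q$ terms cancel. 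This single substitution is the entire content of the estimate.

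Finally I would dispose of the case in which $\boldsymbol{Y}$, rather than $\boldsymbol{X}$, is the equiprobable vector by a symmetry argument identical to the one closing the proof of Theorem \ref{th:gner_fano_IXY}. The mutual information is symmetric, $I(\boldsymbol{X};\boldsymbol{Y})=I(\boldsymbol{Y};\boldsymbol{X})$, and the generalized Hamming distance $H_d(\boldsymbol{X},\boldsymbol{Y})$ counts disagreeing positions regardless of order, so the error distribution $\boldsymbol{p}$, and hence the reference distribution $\boldsymbol{q}$ and the quantity $D(\boldsymbol{p}\|\boldsymbol{q})$, are unchanged when $\boldsymbol{X}$ and $\boldsymbol{Y}$ are interchanged. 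Running the same chain of inequalities with the roles of $\boldsymbol{X}$ and $\boldsymbol{Y}$ swapped therefore yields the identical lower bound, completing the argument.

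There is no genuine obstacle here; the result is essentially a restatement of Theorem \ref{th:gner_fanoRela} under the extra equiprobability assumption. The only point that requires a moment's care is verifying the symmetry step, i.e.\ confirming that swapping $\boldsymbol{X}$ and $\boldsymbol{Y}$ leaves $\boldsymbol{p}$ (and thus the right-hand side $D(\boldsymbol{p}\|\boldsymbol{q})$) invariant, so that the bound derived under ``$\boldsymbol{X}$ equiprobable'' transfers verbatim to the case ``$\boldsymbol{Y}$ equiprobable.''
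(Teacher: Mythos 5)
Your proposal is correct and takes essentially the same route the paper intends: Corollary~\ref{cor:4} is obtained by combining $H(\boldsymbol{X})=n\log q$ (from equiprobability) with the bound of Theorem~\ref{th:gner_fanoRela} so that the two $n\log q$ terms cancel, which is exactly the paper's (unstated but explicitly signposted) derivation ``by using Theorem~\ref{th:gner_fanoRela}.'' The symmetry step you note is handled identically to the closing of the paper's proof of Theorem~\ref{th:gner_fano_IXY}, so there is nothing to add.
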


The distribution $\boldsymbol{q}$ means that the symbol in $\boldsymbol{Y}$ takes any value on $\mathcal{B}$ with equal probability, regardless of what is sent in $\boldsymbol{X}$. So it is a pure random distribution when $X$ and $Y$ are independent from each other. The most desirable coding scheme is that its error distribution $\boldsymbol{p}$ is farthermost from $\boldsymbol{q}$, which also ensures a larger coding rate.

\subsection{Upper Bounds on the Codebook Size}
Suppose $\mathcal{X}$ is a finite alphabet with cardinality $|\mathcal{X}|=q$. Let's consider the input and output alphabets $\mathcal{A}^n=\mathcal{B}^n \subseteq \mathcal{X}^n$ with $|\mathcal{A}^n|=|\mathcal{B}^n|=M$ and a channel to be a sequence of conditional probabilities \cite{Hoverdu2010} $\{P_{\boldsymbol{Y}|\boldsymbol{X}} :\mathcal{A}^n \mapsto \mathcal{B}^n\}$. We donate a codebook with $M$ codewords by $(\boldsymbol{X}_1,\boldsymbol{X}_2,\cdots,\boldsymbol{X}_M)\in \mathcal{A}^n$. A decoder is a random transformation $P_{Z|\boldsymbol{Y}}:\mathcal{B}^n \mapsto \{0,1,2,\cdots,M\}$ where $0$ indicates that the decoder choose error. If messages are equiprobable, the average error probability is defined as $P_b=1-\frac{1}{M} P_{Z|\boldsymbol{X}} (m|\boldsymbol{X}_m)$.
An codebook with $M$ codewords and a decoder whose average probability of error is smaller than $\epsilon$ are called an $(n, M, \epsilon)$-code.

An upper bound on the size of a code as a function of the average probability of symbol error follows the Corollary \ref{cor:1}.

\begin{thm}\label{th:conver_1}
Every $(n, M, \epsilon)$-code for a random transformation $P_{Z|\boldsymbol{Y}}$ satisfies
\begin{equation}\label{rt:conver_1}
    \log M \leq \sup\limits_{\boldsymbol{X}} I(\boldsymbol{X};\boldsymbol{Y}) - D(\boldsymbol{p} \| \boldsymbol{w}) + n\left(1 + P_s\log (q-1)\right)
\end{equation}
where $\boldsymbol{p}=\{p_0,p_1,\cdots,p_n\}$ is the error distribution with $p_k=\Pr\left(H_d(\boldsymbol{X},\boldsymbol{Y})=k\right)$ for $k=0,1,\cdots, n$, $\boldsymbol{w}=\{w_0,w_1,$ $\cdots, w_n\}$ is a probability distribution with $w_k=\frac{C_n^k}{2^n}$.
\end{thm}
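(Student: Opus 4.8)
The plan is to reduce the statement to the conditional-entropy bound of Corollary~\ref{cor:1} by means of the elementary identity $H(\boldsymbol{X}) = I(\boldsymbol{X};\boldsymbol{Y}) + H(\boldsymbol{X}|\boldsymbol{Y})$, exploiting the fact that an $(n,M,\epsilon)$-code with equiprobable messages induces a uniform input distribution. Notice that, unlike the classical converse in Theorem~\ref{th:fano_converse}, the target inequality carries no $\frac{1}{1-\epsilon}$ blow-up; the refined terms $D(\boldsymbol{p}\|\boldsymbol{w})$ and $P_s\log(q-1)$ take its place, so $\epsilon$ should enter only through the code definition and not the final bound.

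First I would observe that, because the $M$ messages are equiprobable and each is encoded into a distinct codeword $\boldsymbol{X}_m\in\mathcal{A}^n$, the transmitted vector $\boldsymbol{X}$ is uniform over the $M$ codewords, whence $H(\boldsymbol{X})=\log M$. This is the only place where the code structure is actually used, and it is the step that must be handled with care: one needs the codewords to be distinct (equivalently, the message entropy to equal $\log M$) for an equality rather than a mere inequality here.

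Next I would decompose $\log M = H(\boldsymbol{X}) = I(\boldsymbol{X};\boldsymbol{Y}) + H(\boldsymbol{X}|\boldsymbol{Y})$, with $\boldsymbol{Y}$ the channel output. The quantity $I(\boldsymbol{X};\boldsymbol{Y})$ is evaluated at the particular input distribution induced by the code, so it is bounded above by $\sup_{\boldsymbol{X}} I(\boldsymbol{X};\boldsymbol{Y})$, the capacity of the block channel. Substituting the Corollary~\ref{cor:1} bound $H(\boldsymbol{X}|\boldsymbol{Y}) \le n - D(\boldsymbol{p}\|\boldsymbol{w}) + nP_s\log(q-1)$ and collecting the two $n$-terms into $n\bigl(1 + P_s\log(q-1)\bigr)$ yields the claimed inequality \eqref{rt:conver_1} at once.

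I expect no deep obstacle: once Corollary~\ref{cor:1} is in hand the argument is a one-line chain of (in)equalities. The only genuine subtlety is the bookkeeping of which variable plays which role — $\boldsymbol{Y}$ is the channel output, while the decoder output $Z$ and the tolerance $\epsilon$ survive only in the definition of the code — together with checking that the error distribution $\boldsymbol{p}$ and symbol error $P_s$ invoked by Corollary~\ref{cor:1} are precisely those of the input--output pair $(\boldsymbol{X},\boldsymbol{Y})$. Making the uniformity claim $H(\boldsymbol{X})=\log M$ explicit (distinct codewords, equiprobable messages) is the single point worth spelling out in the write-up.
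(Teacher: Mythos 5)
Your proposal is correct and follows essentially the same route as the paper: both use $H(\boldsymbol{X})=\log M$ for equiprobable messages, apply Corollary~\ref{cor:1} to bound $H(\boldsymbol{X}|\boldsymbol{Y})$, and replace the induced-input mutual information by $\sup_{\boldsymbol{X}} I(\boldsymbol{X};\boldsymbol{Y})$. Your added remark about requiring distinct codewords for the equality $H(\boldsymbol{X})=\log M$ is a sensible point the paper leaves implicit, but the argument is otherwise identical.
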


\begin{proof}

Since the messages are equiprobable, we have $H(\boldsymbol{X})=\log M$. According Corollary \ref{cor:1},
\begin{equation}\label{dr:th_con1}
    \begin{split}
        I(\boldsymbol{X};\boldsymbol{Y})=&H(\boldsymbol{X})-H(\boldsymbol{X}|\boldsymbol{Y})\\
        \geq &\log M - n + D(\boldsymbol{p} \| \boldsymbol{w}) - n P_s\log (q-1).
    \end{split}
\end{equation}

Solving $\log M$ from (\ref{dr:th_con1}), one can get (\ref{rt:conver_1}), which completes the proof.
\end{proof}

\section{Application to Channel Coding}\label{sec:app}
Consider information transmission over a memoryless discrete $q$-ary symmetric channel  with a channel code $(n,M,\epsilon)$ with crossover probability $\varepsilon$, as shown in Fig. \ref{fig:qbc}. In this case, the probability of symbol error is $p_e=(q-1)\varepsilon$.

Then the error probabilities are
\begin{equation}
    p_k=\Pr\{H_d(\boldsymbol{X},\boldsymbol{Y})=k\}
    =C_n^k p_e^k (1-p_e)^{n-k}
\end{equation}
and the block error probability is
\begin{equation}\label{eq:pb}
    P_b=1-p_0=1-(1-p_e)^n.
\end{equation}

Using the extended Fano's inequality in Theorem \ref{th:gner_fano}, we have
\begin{equation}\label{sim:hexy}
    H_e(\boldsymbol{X}|\boldsymbol{Y})\leq\sum_{k=0}^n p_k\log \left( \frac{C_n^k (q-1)^k} {p_k} \right).
\end{equation}

It is easy to see that the conditional entropy in theory is
\begin{equation}\label{sim:hxy}
    \begin{split}
    &H(\boldsymbol{X}|\boldsymbol{Y})=H\left(1-p_e,\varepsilon,\cdots,\varepsilon\right)\\
    =&-(1-p_e)\log(1-p_e)-(q-1)\log\varepsilon.
    \end{split}
\end{equation}

By Corollary \ref{cor:4}, mutual information is lower bounded by{\small
\begin{equation}\label{sim:iexy}
    \begin{split}
        &I_e(\boldsymbol{X};\boldsymbol{Y})%\geq D(\boldsymbol{p}\|\boldsymbol{q})
        \geq\sum_{k=0}^n p_k\log \left( \frac{C_n^k p_e^k (1-p_e)^{n-k} } {(C_n^k (q-1)^k)/q^n} \right)\\
        &=n\log q+ \sum_{k=0}^n p_k \left[ k\log p_e+(n-k)\log\left(1-p_e\right)\right.\\
        &~~~~~~~~~~~~~~~~~\left.-k\log(q-1) \right].
    \end{split}
\end{equation} }
while the capacity of the memoryless QSC is given by
%\begin{equation}\label{sim:ixy}
%    I(X;Y)=\log q- H\left(1-p_e,\varepsilon,\cdots,\varepsilon\right)
%\end{equation}
\begin{equation}\label{sim:ixy}
    \begin{split}
    &I(X;Y)=\log q- H\left(1-p_e,\varepsilon,\cdots,\varepsilon\right)\\
    =&\log q + (1-p_e)\log(1-p_e) + (q-1)\varepsilon\log\varepsilon.
    \end{split}
\end{equation}

And the relative entropy $D(\boldsymbol{p}\|\boldsymbol{w})$ can be derived as
\begin{equation}
    \begin{split}
        &D(\boldsymbol{p}\|\boldsymbol{w})=\sum_{k=0}^n p_k\log \left( \frac{C_n^k \varepsilon^k (1-(q-1)\varepsilon)^{n-k} } {C_n^k /2^n} \right)\\
        =&n+ \sum_{k=0}^n p_k \left[ k\log p_e+(n-k)\log(1-p_e)\right]\\
    \end{split}
\end{equation}

For a given average symbol error probability constraint $P_s=\epsilon$, the upper bound on the maximum codebook size given by Theorem \ref{th:conver_1} is {\small
\begin{equation}\label{sim:mexy}
    \begin{split}
    \log M_e\leq & I(\boldsymbol{X};\boldsymbol{Y}) - D(\boldsymbol{p}\|\boldsymbol{w})
    + n\left(1+P_s\log(q-1)\right)\\
    =&n\log q-nH(1-p_e,\varepsilon,\cdots,\varepsilon)-\sum_{k=0}^n p_k\left[k\log p_e\right.\\ 
    +&\left.(n-k)\log(1-p_e)\right] + n\epsilon\log(q-1).
    \end{split}
\end{equation} }

On the other hand, by Fano's inequality we have
\begin{equation}\label{sim:hfxy}
    H_f(\boldsymbol{X}|\boldsymbol{Y})\leq H(P_b)+P_b\log(q^n-1)
\end{equation}
with $P_b$ given by (\ref{eq:pb}).

Then the lower bound of the mutual information is
\begin{equation}\label{sim:ifxy}
    I_f(\boldsymbol{X};\boldsymbol{Y})\geq (1-P_b)\log q^n-H(P_b).
\end{equation}

Finally, the upper bound on the codebook size is
\begin{equation}\label{sim:mxy}
    \log M_f \leq \frac{1}{1-\epsilon} \left(nI(X;Y)+H(\epsilon)\right).
\end{equation}

Suppose the QSC parameter are $\varepsilon=0.001$ and $q=7$, we calculated the bounds on conditional entropy, mutual information and codebook size by our proposed results and Fano's inequality.

Firstly, the upper bound on the conditional entropy is presented in Fig. \ref{fig:hxy}. Specially, $H_e(\boldsymbol{X}|\boldsymbol{Y})$ is obtained by the extended Fano's inequality (\ref{sim:hexy}), $H_f(\boldsymbol{X}|\boldsymbol{Y})$ is calculated according to Fano's inequality (\ref{sim:hfxy}) and $H(\boldsymbol{X}|\boldsymbol{Y})$ is the conditional entropy in theory (\ref{sim:hxy}). It is clear that Theorem \ref{th:gner_fano} is tighter than Fano's inequality. Particularly, we have $H_e(\boldsymbol{X}|\boldsymbol{Y})= H(\boldsymbol{X}|\boldsymbol{Y})$ for the QSC. This is because the error distributions are the same for any $\boldsymbol{Y}=\boldsymbol{y}$. So $H(E)=H(E|\boldsymbol{Y})$ holds. Besides, the error pattern is uniformly distributed for a given $k$, regardless of $\boldsymbol{y}$ and $H(E)=H(\boldsymbol{Y}| E,\boldsymbol{Y})=H(\boldsymbol{Y}| E)=\log C_n^k(q-1)^k$ holds. Therefore, the upper bound is tight. However, for Fano's equality, there are relaxations in both $H(E)=H(E|\boldsymbol{Y})$ to $H(E)$ and $H(\boldsymbol{X}|E,\boldsymbol{Y})$ to $P_b\log (M-1)$.
\begin{figure}[!t]
\centering
\includegraphics[width=2.7in]{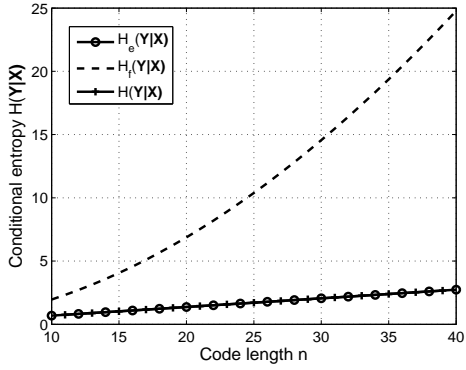}
\caption{The upper bound on conditional entropy, $q=7,\varepsilon=0.001$} \label{fig:hxy}
\end{figure}
\begin{figure}[!t]
\centering
\includegraphics[width=2.9in]{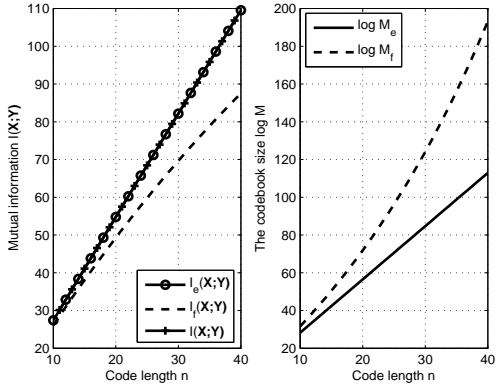}
\caption{Bounds on mutual information and codebook size v.s. blockleng, $q=7,\varepsilon=0.001$} \label{fig:ixy}
\end{figure}

Similarly, the lower bound on mutual information $I_e(\boldsymbol{X};\boldsymbol{Y})$ given by (\ref{sim:iexy}) coincides with $I(\boldsymbol{X};\boldsymbol{Y})$ in theory, given by (\ref{sim:ixy}) and is better than that given by Fano's inequality (\ref{sim:ifxy}).

When we use the upper bound on the codebook size in Theorem \ref{th:conver_1}, it should be noted that it is presented as a function of symbol error probability $P_s$. In fact, $P_b$ is always larger than $P_s$. Therefore, we use the same fraction of them in the calculation of the bounds to make sense of the comparison, i.e., $\epsilon=\frac{P_s}{2}$ for (\ref{sim:mexy}) and $\epsilon=\frac{P_b}{2}$ for (\ref{sim:mxy}). It is also seen from Fig. \ref{fig:ixy} that our new developed result is tighter.

The performances of Theorem \ref{th:gner_fano_IXY} and Theorem \ref{th:conver_1} versus the QSC parameter $\varepsilon$ are shown in Fig. \ref{fig:epslong}, where the block length is chosen as $n=30$. As shown, the mutual information bound is tight and our results are much better than Fano's inequality. In the calculation of the upper bounds on codebook size, the selection of the error probability constraints are also chosen as $\epsilon_e=\frac{P_s(\varepsilon)}{2}$ and $\epsilon_f=\frac{P_b(\varepsilon)}{2}$ so that they are comparable.
\begin{figure}[!t]
\centering
\includegraphics[width=2.9in]{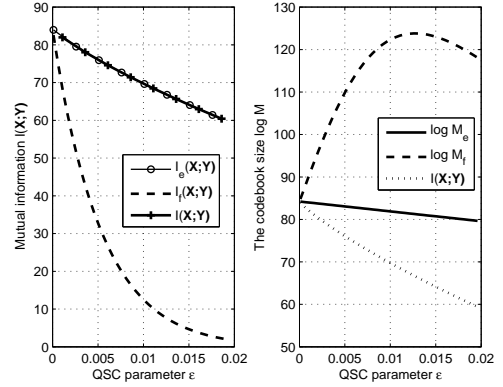}
\caption{Bounds on mutual information and codebook size v.s. QSC parameter $\varepsilon$, $q=7,n=30$} \label{fig:epslong}
\end{figure}

\section{Conclusion}\label{sec:conclusion}
In this paper, we revisited Fano's inequality and extended  it to a general form. Particularly, the relation between the conditional entropy and error probability of two random vectors was considered, other than that between two random variables. This makes the developed results more suitable for source/channel codings in the finite blocklength regime. By investigating the block error pattern more detailedly, the conditional entropy of the original random vector given the received one is upper bounded more tightly by the extended Fano's inequality. Furthermore, the extended Fano's inequality is completely tight for some symmetric channels such the $q$-ary symmetric channels. Converse results are also presented in terms of lower bounds on the mutual information and a upper bound on the codebook size under the blocklength and symbol error constraints, which also have better performances.

\section*{Acknowledgement}
This work was partially supported by INC research grant of Chinese University of Hongkong and the China Major State Basic Research Development Program (973 Program) No. 2012CB316100(2).


\begin{thebibliography}{15}
%\normalsize
%\baselineskip=24pt
\bibitem{Shannon1948}
C. E. Shannon,``A mathematical theory of communication," {\it Bell Syst. Tech. J.} vol. pp.623-656, Oct. 1948.
\bibitem{Fano1952}
R. M. Fano, \textit{Class Notes for Transmission of information}, Course6.574. Combridge, MA: MIT. 1952.
\bibitem{Raymond2008}
R. W. Raymond, \textit{Information Theory and Network Coding}, Berlin/Newyork: Springer. 2008.
\bibitem{Hanverdu1994}
T.S. Han and S. Verdu, ``Generalizing the fano inequality," \textit{IEEE Trans. Inform. Theory}, vol. 40, no. 7, pp. 1247-1250, Jul. 1994.
\bibitem{Hoverdu2010}
S. W. Ho and S. Verdu, ``On the interplay between conditional entropy and error probability" \textit{IEEE Trans. Inform. Theory}, vol.56, no.12, pp.5930-5942, Dec. 2010.
%\bibitem{Thomas}
%Thomas M. Cover, Joy A. Thomas, \textit{Elements of Information Theory},
%Second Edition, John Wiley and Sons, 2006.

\bibitem{Tebbe1968}
D. L. Tebbe and S. J. Dwyer, III, ``Uncertainty and probability of error,"
\textit{IEEE Trans. Inform. Theory}, vol. 14, no. 3, pp. 516-518, May 1968.

\bibitem{Feder1994}
M. Feder and N. Merhav, ``Relations between entropy and error probability,"
\textit{IEEE Trans. Inform. Theory}, vol. 40, no. 1, pp. 259-266, Jan. 1994.

\bibitem{verdu2010}
Y. Polyanskiy, H. V. Poor, and S. Verdu, ``Channel coding rate in the
finite blocklength regime," \textit{IEEE Trans. Inform. Theory}, vol. 56, no. 5, pp.
2307-2359, May 2010.

\end{thebibliography}
\end{document}